\pgfplotsset{compat=1.15}
\newtheorem{proposition}{Proposition}
\theoremstyle{definition}
\theoremstyle{definition}
\newtheorem{example}{Example}
\theoremstyle{definition}
\crefname{prop}{Proposition}{Propositions}
\crefname{mydef}{Definition}{Definitions}
\crefname{lemma}{Lemma}{Lemmas}
\crefname{definition}{Definition}{Definitions}
\crefname{theorem}{Theorem}{Theorems}
\crefname{proposition}{Proposition}{Propositions}
\crefname{corollary}{Corollary}{Corollaries}
\crefname{assumption}{Assumption}{Assumptions}
\crefname{claim}{Claim}{Claims}
\crefname{section}{Section}{Sections}
\crefname{figure}{Figure}{Figures}
\crefname{exmp}{Example}{Examples}
\crefname{observation}{Observation}{Observations}
\newcommand{\reals}{\mathbb{R}}
\DeclareMathOperator*{\argmax}{arg\,max}
\DeclareRobustCommand\citepos													
\def\NAT@nmfmt##1{{\NAT@up##1's}}%
\let\NAT@ctype\z@\NAT@partrue
\pretocmd{\NAT@citex}{%
  \let\NAT@hyper@\NAT@hyper@citex
  \def\NAT@postnote{#2}%
  \setcounter{NAT@total@cites}{0}%
  \setcounter{NAT@count@cites}{0}%
  \forcsvlist{\stepcounter{NAT@total@cites}\@gobble}{#3}}{}{}
\newcounter{NAT@total@cites}
\newcounter{NAT@count@cites}
\def\NAT@postnote{}
\def\NAT@hyper@citex#1{
  \stepcounter{NAT@count@cites}%
  \hyper@natlinkstart{\@citeb\@extra@b@citeb}#1%
  \ifnumequal{\value{NAT@count@cites}}{\value{NAT@total@cites}}
    {\if*\NAT@postnote*\else\NAT@cmt\NAT@postnote\global\def\NAT@postnote{}\fi}{}%
  \ifNAT@swa\else\if\relax\NAT@date\relax
  \else\NAT@@close\global\let\NAT@nm\@empty\fi\fi								
  \hyper@natlinkend}
\renewcommand\hyper@natlinkbreak[2]{#1}
\patchcmd{\NAT@cite}{\if*#3*}{\if*\NAT@postnote*}{}{}
\title{The (No) Value of Commitment}
\author{Nathan Hancart}
\date{\monthyeardate\today}
\begin{document}

\maketitle

\begin{abstract}
    \noindent I provide a sufficient condition under which a principal does not benefit from committing to a mechanism in economic models represented by a maximisation problem under constraints. These problems include mechanism design, principal-agent models or sender-receiver games. In principal-agent problems, this condition holds if the agent has a finite strategy space and the principal's value function is continuous in the mechanism. 
\end{abstract}

\section{Introduction}

Commitment plays an important role in many economic models. The general insight of economic theory is that the value of commitment is positive: if a principal has commitment, he can replicate any action he would play without commitment. Moreover, commitment plays a key role in many standard tools used in economic theory such as the revelation principle \citep{myerson1982, bester_strausz2000, doval_skreta2022}. However, commitment is usually a strong assumption and is sometimes hard to justify. Even if we are willing to assume commitment, knowing that the principal best-replies to the strategy of the agent restricts the set of strategies the modeller has to look at.

To fix ideas, I introduce the setup I will be looking at directly, then will discuss the results.

A principal chooses a mechanism $\alpha\in A$. The principal can also choose a strategy $\sigma\in \Sigma$, subject to the constraint that $\sigma\in C(\alpha)\subseteq \Sigma$. The strategy $\sigma$ could be the strategy of agents in mechanism design problems or a receiver in a sender-receiver game. Denote by $v(\alpha,\sigma)$ the principal's payoffs.

When the principal commits to $\alpha$, it gets\begin{equation}
    V(\alpha)=\max_\sigma v(\alpha,\sigma)\quad \text{s.t.}\;\sigma\in C(\alpha).\tag{$\mathcal{V}(\alpha)$}
\end{equation}

The principal does not benefit from commitment if there is $\alpha^*\in \argmax_\alpha V(\alpha)$ and $\sigma^*\in C(\alpha^*)$ such that $v(\alpha^*,\sigma)\geq v(\alpha,\sigma)$ for all $\alpha\in A$.

In this note, I show that there is no value of commitment when a condition I call stable preferred strategies hold at the optimal mechanism. It holds if for every nearby mechanism, one of the strategies that attains the principal's optimal payoffs is still feasible. In the context of principal-agent problems, where the constraint set $C(\alpha)$ is the best-reply correspondence of the agent, I show that the continuity of the value function $V(\alpha)$ implies the stable preferred strategies property. This result generalises the result from \citet{benporath_et_al2021} who show that there is no value of commitment whenever the principal is indifferent amongst the best-replies of the agent at the optimal mechanism: when this condition is satisfied, the value function is continuous. I also link this result to \citepos{lipnowski2020} result that also shows that continuity of the value function implies the equivalence between cheap-talk and Bayesian persuasion in sender-receiver games.

In an example, I also show that the simple type dependence assumption on the agent's preferences is key to the result of no value of commitment in games with evidence disclosure introduced in \citet{benporath_et_al2019}. In particular, it is not true that in evidence games where the principal's payoff is $v(a,\theta)=\nu(\theta)u(a,\theta)+\psi(a)$ where $a$ is the action taken by the principal, $\theta$ the type of the agent and $u(a,\theta)$ his payoffs, the principal does not benefit from commitment.

\section{Results}

I assume throughout that $A$ is a compact convex subset of $\reals^n$. The principal can also choose a strategy $\sigma\in \Sigma$, also a closed convex subset of $\reals^n$, subject to the constraint that $\sigma\in C(\alpha)\subseteq \Sigma$, where $C(\cdot)$ is a non-empty, closed, convex and upper hemicontinuous correspondence.

The constraint $C(\alpha)$ can be a best-reply correspondence, a set of correlated equilibria or some other constraint set. However, it cannot be a set of Nash Equilibria as these are not necessarily convex. I also note that $\sigma$ can incorporate some elements of the mechanism the principal has the power to commit to. The mechanism $\alpha$ can be a strategy of the principal in a mechanism design problem or an information structure in a sender-receiver game.

I assume that $v(\alpha,\sigma)$ is concave in $\alpha$, linear in $\sigma$ and jointly continuous. 

I denote by $V^*=\max_\alpha V(\alpha)$ the maximal value attainable by the principal.

The principal has \textit{stable preferred strategies} at $\alpha^*$ if there is  $\epsilon>0$ such that for all $\alpha\in B_\epsilon(\alpha^*)$,\begin{equation}\label{eq_cond2}
\text{there is }\sigma\in C(\alpha)\text{ such that }v(\alpha^*,\sigma)=V^*. 
\end{equation}

In words, the principal has stable preferred strategies at $\alpha^*$ if for every nearby mechanism, one of the strategies that attains the principal's optimal payoffs is still feasible. 

\begin{proposition}\label{main_result}
    Let $\alpha^*\in \max V(\alpha)$. If the principal has stable preferred strategies at $\alpha^*$, there is no value of commitment.
\end{proposition}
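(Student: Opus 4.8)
The plan is to construct a single strategy $\sigma^*\in C(\alpha^*)$ that simultaneously attains the commitment value, $v(\alpha^*,\sigma^*)=V^*$, and makes $\alpha^*$ a global best reply, $v(\alpha,\sigma^*)\le V^*$ for every $\alpha\in A$; together with $\sigma^*\in C(\alpha^*)$ this is exactly the no-value-of-commitment conclusion. Write $P=\{\sigma\in C(\alpha^*):v(\alpha^*,\sigma)=V^*\}$ for the set of preferred strategies. Taking $\alpha=\alpha^*$ in the stable preferred strategies condition shows $P\neq\emptyset$; since $v(\alpha^*,\cdot)$ is linear and $C(\alpha^*)$ is compact and convex, $P$ is compact and convex as well.

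The first step is a pointwise bound: for every $\alpha\in A$ there is $\sigma_\alpha\in P$ with $v(\alpha,\sigma_\alpha)\le V^*$, i.e.\ $\min_{\sigma\in P}v(\alpha,\sigma)\le V^*$. Fixing $\alpha$, I would set $\alpha_t=(1-t)\alpha^*+t\alpha$, which lies in $A\cap B_\epsilon(\alpha^*)$ for small $t>0$ by convexity of $A$. Stable preferred strategies supplies $\sigma_t\in C(\alpha_t)$ with $v(\alpha^*,\sigma_t)=V^*$. Since $\sigma_t$ is feasible at $\alpha_t$ and $\alpha^*$ maximises $V$, we have $v(\alpha_t,\sigma_t)\le V(\alpha_t)\le V^*$, and concavity of $v(\cdot,\sigma_t)$ then gives
\[
V^*\ge v(\alpha_t,\sigma_t)\ge (1-t)\,v(\alpha^*,\sigma_t)+t\,v(\alpha,\sigma_t)=(1-t)V^*+t\,v(\alpha,\sigma_t),
\]
whence $v(\alpha,\sigma_t)\le V^*$. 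Letting $t\to0$, upper hemicontinuity and compact-valuedness of $C$ yield a subsequential limit $\sigma_\alpha\in C(\alpha^*)$, while joint continuity preserves $v(\alpha^*,\sigma_\alpha)=V^*$ (so $\sigma_\alpha\in P$) and $v(\alpha,\sigma_\alpha)\le V^*$.

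The second step aggregates these direction-by-direction strategies into one. The bound just proved gives $\max_{\alpha\in A}\min_{\sigma\in P}v(\alpha,\sigma)\le V^*$. I would then invoke Sion's minimax theorem, with $A$ and $P$ compact and convex and $v$ jointly continuous, concave (hence quasiconcave) in $\alpha$ and linear (hence quasiconvex) in $\sigma$, to interchange the two operators:
\[
\min_{\sigma\in P}\max_{\alpha\in A}v(\alpha,\sigma)=\max_{\alpha\in A}\min_{\sigma\in P}v(\alpha,\sigma)\le V^*.
\]
Picking $\sigma^*\in P$ that attains the outer minimum (the inner maximum is continuous in $\sigma$ and $P$ is compact) delivers $v(\alpha,\sigma^*)\le V^*=v(\alpha^*,\sigma^*)$ for all $\alpha\in A$, with $\sigma^*\in C(\alpha^*)$, which is the claim.

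The main obstacle is the first step: upgrading $\sigma_t$, which is feasible only at the perturbed mechanism $\alpha_t$, into a strategy feasible at $\alpha^*$ itself without losing the inequality $v(\alpha,\cdot)\le V^*$. The device that resolves it is to perturb $\alpha^*$ toward $\alpha$ and let the perturbation vanish: concavity in the mechanism converts the feasibility bound at $\alpha_t$ into a bound at $\alpha$, and upper hemicontinuity of $C$ furnishes a feasible limit in $C(\alpha^*)$. The routine points to verify are that $C$ is compact-valued (needed both for $V(\cdot)$ to be attained and for extracting the subsequential limit) and that the hypotheses of Sion's theorem genuinely hold so that the min and max may be swapped.
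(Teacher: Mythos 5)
Your proof is correct, but it takes a genuinely different route from the paper's. The paper works with the larger set $\Sigma^*$ of preferred strategies that are feasible at \emph{some} mechanism in a small closed ball $\tilde{A}$ around $\alpha^*$, and applies the Kakutani fixed-point theorem to the correspondence $(\alpha,\sigma)\mapsto \bigl(C(\alpha)\cap\Sigma^*\bigr)\times \argmax_{\alpha'\in\tilde{A}}v(\alpha',\sigma)$; the fixed point $(\hat{\alpha},\hat{\sigma})$ is a mutual best-response pair with $v(\hat{\alpha},\hat{\sigma})=V^*$, and the convex-combination/concavity trick you use is invoked only at the very end, to upgrade optimality of $\hat{\alpha}$ against $\hat{\sigma}$ from $\tilde{A}$ to all of $A$. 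You instead deploy that same trick up front, direction by direction, to obtain the pointwise bound $\min_{\sigma\in P}v(\alpha,\sigma)\le V^*$ for each deviation $\alpha$, and then aggregate into a single strategy via Sion's minimax theorem rather than a fixed-point theorem. Both arguments lean on the same structural assumptions, used in parallel places: convexity of $C$'s values and linearity of $v$ in $\sigma$ (you need them so that $P$ is convex and $v(\alpha,\cdot)$ is quasiconvex for Sion; the paper needs them so that $\Sigma^*$ and $C(\alpha)\cap\Sigma^*$ are convex for Kakutani), and compact-valuedness of $C$, which the paper also uses but only implicitly when it asserts that $\Sigma^*$ is ``closed and therefore compact'' --- your explicit flagging of this point is apt, since closedness alone does not give boundedness. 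What your route buys: the no-commitment outcome is supported at the original optimizer itself --- you exhibit $\sigma^*\in C(\alpha^*)$ with $v(\alpha^*,\sigma^*)=V^*\ge v(\alpha,\sigma^*)$ for all $\alpha\in A$ --- whereas the paper's fixed point delivers a possibly different optimal mechanism $\hat{\alpha}\in\tilde{A}$ witnessing the definition (which is all the existential definition requires). What the paper's route buys: it avoids your limit extraction along $t\to 0$, since the fixed point lands directly in $C(\hat{\alpha})\cap\Sigma^*$, at the cost of having to verify nonemptiness, convexity and upper hemicontinuity of the two correspondences fed to Kakutani.
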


The proof builds on and generalises the proof of \citet{benporath_et_al2021}, Theorem 4.

\begin{proof}
    Let $\Tilde{A}$ a closed ball of $A$ containing $\alpha^*$ where condition (\ref{eq_cond2}) holds. The set $\Tilde{A}$ is compact and convex. Let $V^*=\max_\alpha V(\alpha)$.

    Let $\Sigma^*=\{\sigma: \sigma\in C(\alpha)\text{ for some $\alpha\in \Tilde{A}$ and }v(\alpha^*,\sigma)=V^*\}$. By condition (\ref{eq_cond2}), $\Sigma^*$ is non-empty. Because $C$ is upper hemicontinuous and $v$ is continuous in $\sigma$, $\Sigma^*$ is also closed and therefore compact. Because $C$ is convex-valued and $v(\alpha,\cdot)$ is linear, $\Sigma^*$ is convex. By definition, $\Sigma^*\cap C(\alpha)$ is non-empty for all $\alpha\in \Tilde{A}$ and is also closed and convex. Similarly, $\argmax_{\alpha\in \Tilde{A}}v(\alpha,\sigma)$ is non-empty, closed and convex valued. Moreover, both $\Sigma^*\cap C(\alpha)$ and $\argmax_{\alpha\in \Tilde{A}}v(\alpha,\sigma)$ are upper hemicontinuous in $\alpha$ and $\sigma$ respectively.

     By the Kakutani fixed-point theorem, there exists a fixed point of the correspondence $C(\alpha)\cap \Sigma^*\times \argmax_{\alpha\in \Tilde{A}}v(\alpha,\sigma)$, $(\hat{\alpha},\hat{\sigma})$. We want to show that $\hat{\sigma}\in C(\hat{\alpha})$, $\hat{\alpha}\in \argmax_{\alpha\in A} v(\alpha,\hat{\sigma})$ and $V^*=v(\hat{\alpha},\hat{\sigma})$.

     By definition, $\hat{\sigma}\in C(\hat{\alpha})$.

     Because $\hat{\sigma}\in \Sigma^*\cap C(\hat{\alpha})$, we have $V^*=v(\alpha^*,\hat{\sigma})\geq v(\hat{\alpha},\hat{\sigma})$. Because $\alpha^*\in \Tilde{A}$, we also have $v(\hat{\alpha},\hat{\sigma})\geq v(\alpha^*,\hat{\sigma})$. Therefore, $v(\hat{\alpha},\hat{\sigma})= v(\alpha^*,\hat{\sigma})$.

     For any $\alpha\in A$, $(1-\epsilon)\alpha^*+\epsilon \alpha\in \Tilde{A}$ for $\epsilon>0$ small enough. Because $\hat{\alpha}\in \argmax_{\alpha\in \Tilde{A}}v(\alpha,\hat{\sigma})$, we get $$
    v(\alpha^*,\hat{\sigma})= v(\hat{\alpha},\hat{\sigma})\geq v((1-\epsilon)\alpha^*+\epsilon \alpha,\hat{\sigma}).
    $$Using the concavity of $v$ and rearranging, we obtain $ v(\alpha^*,\hat{\sigma})=v(\hat{\alpha},\hat{\sigma})\geq v(\alpha,\hat{\sigma})$. Therefore, $\hat{\alpha}\in \argmax_{\alpha\in A} v(\alpha,\hat{\sigma})$.
\end{proof}

\cref{main_result} gives us a tool to verify whether the principal has no value of commitment. We can obtain a solution that is easier to verify if we put more assumptions on $C$ as we do in the next section.

\section{Principal-agent problems}

Suppose the situation we want to model is a principal-agent problem. Let $u(\alpha,\sigma)$ the utility function of the agent and $BR(\alpha)$ be the set of best-replies to $\alpha$. Here, we have $BR(\alpha)=C(\alpha)$. I also assume that $\Sigma=\Delta(S)^n$ for some finite set $S$. I assume that $u(\alpha,\sigma)$ and $v(\alpha,\sigma)$ are linear in $\sigma$ and jointly continuous. If these assumptions are satisfied, we are in a \textit{principal-agent problem}.

\begin{proposition}\label{prop_cont}  
    Suppose we are in a principal-agent problem. Let $\alpha^*\in\argmax_\alpha V(\alpha)$. If $V(\alpha)$ is continuous at $\alpha^*$, then the principal has stable preferred strategies at $\alpha^*$.
\end{proposition}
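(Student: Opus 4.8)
The plan is to exploit the finiteness of $S$ to turn the best-reply correspondence into a finite combinatorial object, and then to argue by contradiction from the continuity of $V$ at $\alpha^*$. Throughout I write $V(\alpha^*)=V^*$, which holds since $\alpha^*\in\argmax_\alpha V(\alpha)$. The first step is to record the structural consequences of finiteness. Since $\Sigma=\Delta(S)^n$ is a polytope with a finite set $E$ of vertices (the pure strategy profiles) and $u(\alpha,\cdot)$ is linear, $BR(\alpha)$ is the face of $\Sigma$ maximising $u(\alpha,\cdot)$; hence $BR(\alpha)=\mathrm{conv}(P(\alpha))$, where $P(\alpha):=\argmax_{s\in E}u(\alpha,s)$ is the finite set of pure best replies.

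The key lemma I would prove is a local shrinking property: there is $\epsilon_0>0$ such that $P(\alpha)\subseteq P(\alpha^*)$ for every $\alpha\in B_{\epsilon_0}(\alpha^*)$. This follows because each pure profile $s\notin P(\alpha^*)$ is strictly beaten at $\alpha^*$ by some $\hat s\in P(\alpha^*)$, i.e. $u(\alpha^*,\hat s)>u(\alpha^*,s)$; by continuity of $u$ this strict inequality persists on a neighbourhood of $\alpha^*$, so $s\notin P(\alpha)$ there, and taking the minimum of the radii over the finitely many such $s$ yields a uniform $\epsilon_0$. In particular $BR(\alpha)\subseteq BR(\alpha^*)$ for all $\alpha\in B_{\epsilon_0}(\alpha^*)$.

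With this in hand I would take $\epsilon=\epsilon_0$ and suppose, for contradiction, that stable preferred strategies fails at this radius, so that there is a sequence $\alpha_k\to\alpha^*$ with $\alpha_k\in B_{\epsilon_0}(\alpha^*)$ for which no $\sigma\in BR(\alpha_k)$ satisfies $v(\alpha^*,\sigma)=V^*$. Because $E$ has only finitely many subsets, I can pass to a subsequence along which $P(\alpha_k)$ is constant, equal to some $Q\subseteq P(\alpha^*)$, so $BR(\alpha_k)=\mathrm{conv}(Q)$ for all $k$. Since $\mathrm{conv}(Q)\subseteq BR(\alpha^*)$, the definition of $V^*=V(\alpha^*)$ gives $v(\alpha^*,\sigma)\leq V^*$ for every $\sigma\in\mathrm{conv}(Q)$, and by assumption the value $V^*$ is never attained on the compact set $\mathrm{conv}(Q)$; hence $M_\infty:=\max_{\sigma\in\mathrm{conv}(Q)}v(\alpha^*,\sigma)<V^*$.

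Finally I would push the contradiction through the continuity of $V$. Writing $V(\alpha_k)=\max_{\sigma\in\mathrm{conv}(Q)}v(\alpha_k,\sigma)$, uniform continuity of $v$ on the compact set $B_{\epsilon_0}(\alpha^*)\times\mathrm{conv}(Q)$ gives that $v(\alpha_k,\cdot)\to v(\alpha^*,\cdot)$ uniformly on $\mathrm{conv}(Q)$, so $V(\alpha_k)\to M_\infty<V^*$; but continuity of $V$ at $\alpha^*$ forces $V(\alpha_k)\to V(\alpha^*)=V^*$, a contradiction, so stable preferred strategies holds. The main obstacle, and the only place finiteness is genuinely used, is the local shrinking lemma together with the subsequence-with-constant-$P$ step: these are exactly what let me replace the moving sets $BR(\alpha_k)$ by a single fixed compact set $\mathrm{conv}(Q)$ contained in $BR(\alpha^*)$, thereby converting a failure of stable preferred strategies into a downward jump of $V$. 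Without finiteness the best-reply sets could deform continuously instead of collapsing onto subfaces of $BR(\alpha^*)$, and this reduction would break down.
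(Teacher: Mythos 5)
Your proof is correct, but it is organised around a different key lemma than the paper's own argument. The paper argues by contradiction directly on a sequence of strategies: along bad mechanisms $\alpha_\epsilon \to \alpha^*$ it selects a principal-preferred best reply $\sigma_\epsilon \in BR(\alpha_\epsilon)$, takes it without loss to be pure (by linearity of $v$ in $\sigma$), uses finiteness of the pure-strategy set to make the sequence constant along a subsequence, places the constant limit $\tilde\sigma$ in $BR(\alpha^*)$ by continuity of $u$, and then derives the contradiction $v(\alpha^*,\tilde\sigma) = V^*$ from continuity of $V$, even though $\tilde\sigma$ lies outside $\Sigma^* = \{\sigma : v(\alpha^*,\sigma)=V^*\}$. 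You instead first establish the local shrinking property $BR(\alpha)\subseteq BR(\alpha^*)$ near $\alpha^*$ (via $P(\alpha)\subseteq P(\alpha^*)$ on pure best replies) and then apply the pigeonhole principle to the finitely many possible vertex sets to freeze the \emph{entire} best-reply face $\mathrm{conv}(Q)\subseteq BR(\alpha^*)$ along a subsequence; the contradiction becomes a comparison of maxima of $v(\alpha_k,\cdot)$ and $v(\alpha^*,\cdot)$ over one fixed compact set. Both arguments rest on the same two pillars --- finiteness of pure strategies and continuity of $V$ at $\alpha^*$ --- so the mathematical content is close, but your set-based formulation buys two things: it dispenses with the paper's somewhat informal selection steps (the ``without loss pure'' principal-preferred reply and the claim that $\sigma_\epsilon$ avoids the ``support of $\Sigma^*$''), replacing them with a clean maximum comparison; and your shrinking lemma is exactly the fact the paper invokes separately to prove its third proposition (aligned games have a continuous value function), so your route makes the connection between the two results explicit and reusable. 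The paper's proof, in exchange, is shorter, since it only ever tracks a single sequence of pure strategies rather than the whole best-reply correspondence.
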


\begin{proof}
   Let $\Sigma^*=\{\sigma:v(\alpha^*,\sigma)=V^*\}$.
    
    Suppose it is not the case, i.e., for all $\epsilon>0$, there is $\alpha_\epsilon\in B_\epsilon(\alpha^*)$ such that $$
    \text{for all}\,\sigma\in \Sigma^*,\; u(\alpha_\epsilon,\sigma^*)< u(\alpha_\epsilon,\sigma), \text{for some}\,\sigma\notin \Sigma^*.
    $$
    Take a sequence of $\alpha_\epsilon$ and a principal-preferred best-reply $\sigma_\epsilon\in BR(\alpha_\epsilon)$. By hypothesis, $\sigma_\epsilon\notin \Sigma^*$. Without loss, we can take $\sigma_\epsilon$ in pure strategy and because it does not belong to $\Sigma^*$, the pure strategy $\sigma_\epsilon$ does not belong to the support of $\Sigma^*$. This implies that $\sigma_\epsilon\rightarrow \Tilde{\sigma}\notin \Sigma^*$ as $\epsilon\rightarrow 0$ (possibly taking a convergent subsequence and using that there is a finite number of pure strategies). By continuity of $u$, we have $\Tilde{\sigma}\in BR(\alpha^*)$. But then by continuity of $V$, we have $V(\alpha_\epsilon)=v(\alpha_\epsilon,\sigma_\epsilon)\rightarrow V^*=v(\alpha^*,\Tilde{\sigma})$ for $\sigma\notin \Sigma^*$. A contradiction.\end{proof}

Continuity of the value does not generally imply that the stable preferred strategy condition holds. A key assumption we used in the proof here is that we can focus without loss of generality on pure strategies and that converging sequences of pure strategies have to be constant at some point in the sequence. That would not be the case in an infinite strategy space.

\citet{lipnowski2020} (conjectured by \citet{forges2020}) shows that in sender-receiver games with a finite number of actions and states, an optimal information structure can be implemented by cheap-talk when the value function that takes as input posterior beliefs is continuous. The model presented here allows for modelling sender-receiver games. Let $\Theta$ be a set of state, $M$ a message space and $S$ a finite action space. The strategy of the principal is an information structure $\alpha:\Theta\rightarrow \Delta M$ and the strategy of the agent is $\sigma:M\rightarrow\Delta S$. The posterior distribution given message $m$ varies continuously in the information structure $\alpha$. Therefore \cref{prop_cont} implies \citepos{lipnowski2020} result when restricting attention to sender-receiver games.

\citet{benporath_et_al2021} show that whenever at the optimal mechanism $\alpha$, the principal is indifferent across all the best-replies of the agent, then the principal does not benefit from commitment. If a game has this property, they call this game \textit{aligned}. The following result shows that if the game is aligned, then the value function is continuous.

\begin{proposition}
    Suppose we are in a principal-agent problem. If for all $\sigma\in BR(\alpha)$, $v(\alpha,\sigma)=V(\alpha)$, then $V(\alpha)$ is continuous at $\alpha$.
\end{proposition}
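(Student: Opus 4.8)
The plan is to show that $V$ is both upper and lower semicontinuous at $\alpha$, where $V(\alpha)=\max_{\sigma\in BR(\alpha)}v(\alpha,\sigma)$. Throughout I would use that $\Sigma=\Delta(S)^n$ is compact, that $BR$ is closed-valued and hence compact-valued, that $BR$ is upper hemicontinuous (so it has closed graph), and that $v$ is jointly continuous, so that the maximum defining $V$ is attained. Fix an arbitrary sequence $\alpha_k\to\alpha$; it then suffices to establish $\limsup_k V(\alpha_k)\le V(\alpha)$ and $\liminf_k V(\alpha_k)\ge V(\alpha)$.

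Upper semicontinuity is the routine direction and does not use the alignment hypothesis. I would pass to a subsequence along which $V(\alpha_k)$ converges to $\limsup_k V(\alpha_k)$, pick for each $k$ a maximiser $\sigma_k\in BR(\alpha_k)$ with $v(\alpha_k,\sigma_k)=V(\alpha_k)$, and extract a further subsequence with $\sigma_k\to\bar\sigma$ by compactness of $\Sigma$. The closed graph of $BR$ gives $\bar\sigma\in BR(\alpha)$, and joint continuity of $v$ gives $\limsup_k V(\alpha_k)=v(\alpha,\bar\sigma)\le V(\alpha)$, the last inequality because $\bar\sigma$ is a feasible best reply at $\alpha$.

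The lower semicontinuous direction is where the alignment hypothesis does the work, and where I expect the only real obstacle. In general $BR$ fails to be lower hemicontinuous, so a value-maximising best reply at $\alpha$ need not be approachable by best replies at nearby $\alpha_k$, and $V$ can drop discontinuously: this happens precisely when some knife-edge best reply at $\alpha$ is strictly preferred by the principal but disappears under perturbation. I would run the same extraction as above, but starting from a subsequence realising $\liminf_k V(\alpha_k)$: take $\sigma_k\in BR(\alpha_k)$ with $v(\alpha_k,\sigma_k)=V(\alpha_k)$, pass to $\sigma_k\to\bar\sigma$, and conclude $\bar\sigma\in BR(\alpha)$ and $\liminf_k V(\alpha_k)=v(\alpha,\bar\sigma)$. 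The subtlety is that $\bar\sigma$ is merely \emph{some} best reply at $\alpha$, not necessarily the principal-preferred one, so on its own this pins down the liminf only as the value of an arbitrary best reply. The alignment hypothesis closes exactly this gap: since $v(\alpha,\sigma)=V(\alpha)$ for every $\sigma\in BR(\alpha)$, we get $v(\alpha,\bar\sigma)=V(\alpha)$, and hence $\liminf_k V(\alpha_k)=V(\alpha)$. Combining the two bounds yields $\lim_k V(\alpha_k)=V(\alpha)$, which is continuity of $V$ at $\alpha$.
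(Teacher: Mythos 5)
Your proof is correct, but it takes a genuinely different route from the paper's. The paper's argument hinges on a structural fact specific to the finite-strategy, linear-utility setting: because there are finitely many pure strategies and $u(\alpha,\cdot)$ is linear, any pure strategy that is strictly suboptimal at $\alpha$ remains strictly suboptimal for all $\alpha'$ in a neighbourhood, so there is a ball $B_\epsilon(\alpha)$ on which $BR(\alpha')\subseteq BR(\alpha)$; every best reply at a nearby mechanism is then already a best reply at $\alpha$, and alignment immediately pins its value at $V(\alpha)$. You bypass this local-inclusion property entirely and instead run a generic closed-graph argument: compactness of $\Sigma$, upper hemicontinuity (hence closed graph) of $BR$, joint continuity of $v$, and alignment applied to the limit point $\bar\sigma\in BR(\alpha)$. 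Your version is actually more general --- it never uses finiteness of $S$ or linearity of $u$, so it would cover compact infinite strategy spaces with an upper hemicontinuous best-reply correspondence, whereas the paper's inclusion $BR(\alpha')\subseteq BR(\alpha)$ can fail there. What the paper's route buys is the explicit identification of the finite/linear structure at work, which is the same structure driving its Proposition 2 (where, as the paper remarks, finiteness \emph{is} essential); your proof makes clear that for this particular proposition that structure is dispensable. One small presentational point: your upper-semicontinuity step needs no alignment and your lower-semicontinuity step is where it enters, but since both steps use the identical extraction, you could state the argument once --- every limit point of $V(\alpha_k)$ has the form $v(\alpha,\bar\sigma)$ with $\bar\sigma\in BR(\alpha)$, which alignment forces to equal $V(\alpha)$.
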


\begin{proof}
    By the upper hemicontinuity of the best-reply correspondence and the fact that $\Sigma=(\Delta S)^n$ and $u(\alpha,\cdot)$ is linear, there is a open ball around $\alpha$, $B_\epsilon(\alpha)$ such that for all $\alpha'\in B_\epsilon(\alpha)$, $BR(\alpha')\subseteq BR(\alpha)$. Therefore, for any sequence $(\alpha_n,\sigma_n)$ with $\alpha_n\in B_\epsilon(\alpha)$ and $\sigma_n\in BR(\alpha_n)$, we have (possibly taking a convergent subsequence), $\sigma_n\rightarrow\sigma \in BR(\alpha)$. Therefore, $v(\alpha_n,\sigma_n)\rightarrow V(\alpha)$.
\end{proof}

This observation was also made by \citet{lipnowski2020} in the context of sender-receiver games. Most results on the value of commitment in the mechanism design with evidence literature have games that are aligned. Some examples are \citet{glazer_rubinstein2004, glazer_rubinstein2006}, \citet{benporath_et_al2019} or \citet{hart_et_al2017} (see \citet{benporath_et_al2021} for a discussion). \citet{kamenica_lin2024} establish a similar result in sender-receiver games.

The conditions of aligned games and continuous value function differ when there is a best-reply of the agent, say $\sigma'$ that makes the principal strictly worse off at the optimal mechanism $\alpha^*$. To guarantee that the value function is continuous at $\alpha^*$, it must also be the case that $\sigma'$ is not strictly preferred if we perturb $\alpha^*$ a little bit. 

Below, I provide an example of a mechanism design problem with evidence and moral hazard where the value function is continuous and therefore the principal does not benefit from commitment. Noteworthy features of this examples are that the agent's payoff is non-linear in the mechanism and the principal is not indifferent across best-replies of the agent at the optimum. I then show an example of a game where there is a discontinuity in the value function and there is value to commitment. At the same time, I show that the result on the value of commitment in \citet{benporath_et_al2019} does not extend beyond simple type dependence.\footnote{In a principal-agent problem with private information $\theta\in \Theta$, a payoff function satisfies simple type dependence if for all $\theta\in \Theta$, $u(a,\theta)=u(a)$ or $=-u(a)$ for some function $u(a)$.}

\begin{example}
    This example is a variation over mechanism with Dye evidence \citep{dye1985}. The principal must make an investment decision, $a\in [-2,2]$. The agent can make an investment in their skills $x=0,1$. In addition, the agent has a private information about their productivity $\theta\in\Theta\subset [-1,1]$. With some probability, the agent can certify his productivity. With complement probability, he can only disclose that he did not receive evidence, denoted by $\emptyset$. Each type is pair of productivity and evidence to disclose, either $\{\{\theta\},\emptyset\}$ or $\{\emptyset\}$.\footnote{Given the payoffs we will assume, cheap-talk cannot help the principal in this mechanism design problem.} Therefore a type is an element of $\Theta\times \{e_1,e_0\}$, where $e_1$ indicates that the agent can produce evidence and $e_0$ that he cannot. The set of evidence to disclose is $M=\Theta\cup\{\emptyset\}$. The prior distribution on types is denoted by $\mu\in \Delta(\Theta\times \{e_1,e_0\})$.

    The payoff of the principal is $v(a,x,\theta)=\theta\cdot (1+x)\cdot a -\frac{a^2}{2}$. The payoff of the agent is $u(a,x)=xa+(1-x)\big(\mathbbm{1}[a<0]\cdot\frac{a}{2}+\mathbbm{1}[a\geq 0]\cdot a\big)$. Therefore when $a\geq 0$, the agent is indifferent between action $x=1$ and $x=0$. When $a<0$, the agent strictly prefers action $x=0$. The agent always strictly prefers a higher $a$.
    
    The interpretation is that the principal can invest in the production of two goods where positive $a$ indicates the first good and negative $a$ the second good. The agent wants the principal to invest as much as possible in the first good as little as possible in the second. Positive productivity $\theta$ is associated with positive productivity in the first good and negative $\theta$ is associated in positive productivity in the second good. On top of that, the agent can choose whether to improve his productivity in the dimension he is good at by choosing $x=1$ or $x=0$. He strictly prefers to not invest if he is assigned to the task he does not like ($a<0$) but is otherwise indifferent.

    A strategy for the agent is a mapping $\sigma:\Theta\times \{e_1,e_0\}\rightarrow \Delta(M\times \{0,1\})$ with the restriction that $\sigma(\theta,x\vert\theta,e)>0\Rightarrow e=e_1$ and $\sigma(\theta',x\vert\theta,e)=0$ for all $\theta'\neq \theta$. That is, only types that have evidence can produce evidence and evidence has to be correct.

    A strategy for the principal is a mapping $\alpha:M\rightarrow[-2,2]$. Note that I assume that the investment decision $x$ is not observed by the principal.

    For a given evidence $e$, the principal's preferred best-reply is if $\theta>0$, to play $x=1$ whenever $\alpha(m)\geq 0$ and $x=0$ otherwise. If $\theta<0$, to play $x=0$ for all $\alpha(m)$.

    Given $\alpha$, the agent always chooses the message that maximises his probability of being accepted.
    
    For type $(\theta,e_1)$ with $\theta>0$, the value to the principal is $$
    \theta \max\{\alpha(\theta),\alpha(\emptyset)\}\big(1+\mathbbm{1}[\max\{\alpha(\theta),\alpha(\emptyset)\}\geq 0]\big)-\frac{\max\{\alpha(\theta),\alpha(\emptyset)\}^2}{2}.
    $$
    For type $(\theta,e_1)$ with $\theta<0$, the value to the principal is $$
    \theta \max\{\alpha(\theta),\alpha(\emptyset)\}-\frac{\max\{\alpha(\theta),\alpha(\emptyset)\}^2}{2}.
    $$
    For type $(\theta,e_0)$ with $\theta>0$, the value to the principal is $$
    \theta \alpha(\emptyset)\big(1+\mathbbm{1}[\alpha(\emptyset)\geq 0]\big)-\frac{\alpha(\emptyset)^2}{2}.
    $$
    For type $(\theta,e_0)$ with $\theta<0$, the value to the principal is $$
    \theta \alpha(\emptyset)-\frac{\alpha(\emptyset)^2}{2}.
    $$
    All these functions are continuous in $\alpha$ and therefore for any distribution $\mu$, the value function $V(\alpha)$ is continuous. 
\end{example}

The following example considers a generalisation of the payoff function introduced in \citet{benporath_et_al2019} that does not assume simple type dependence. 

\begin{example}
    Let $\Theta=\{\theta_1,\theta_2\}$, $M=\{m_1,m_2\}$, $M(\theta_1)=\{m_1\}$, $M(\theta_2)=\{m_1,m_2\}$ and $A=\{1,2,3\}$. Take some $\epsilon\in (\frac{2}{9},\frac{1}{3})$. The prior probability on type $\theta_1$ is $\mu$.

The utility of the agent is $u(a,\theta_1)=a$ and $$u(a,\theta_2)=\begin{cases}
    0&\text{if }a=1\\
    1-\epsilon&\text{if }a=2\\
    1&\text{if }a=3
\end{cases}.$$The utility of the principal is $v(a,\theta)=\nu(\theta)u(a,\theta)+\psi(a)$ with $\nu(\theta_1)=1$, $\nu(\theta_2)=-3$ and $\psi(1)=1$ and $\psi(2)=\psi(3)=2$. The principal's payoff is increasing in $a$ when $\theta=\theta_1$ and decreasing when $\theta=\theta_2$.

A strategy for the agent is a mapping $\sigma:\Theta\rightarrow\Delta M$ with $\sigma(m\vert\theta)>0\Rightarrow m\in M(\theta)$. Only type $\theta_2$ has a real choice of strategy and we will denote by $\sigma$ the probability of choosing message $m_2$. A strategy for the principal is a mapping $\alpha:M\rightarrow\Delta A$.

\textbf{Claim: }In the game described above, there is value to committing to $\alpha$.

    Suppose there is no value of commitment. At the optimum, if $\sigma(m_2\vert\theta_2)>0$, then the best-reply to $m_2$ is to play $a=1$ as only $\theta_2$ can send message $m_2$. Given this best-reply, type $\theta_2$ is only willing to send $m_2$ with positive probability if $\alpha(1\vert m_1)=1$, otherwise type $\theta_2$ strictly prefers to send $m_1$. Therefore the principal plays the same action independently of the message received.

    We conclude that without commitment, the best the principal can do is to use the same action independently of the message received. 

    Note that given the assumption on $\epsilon$, there is always a prior $\mu$ such that playing $a=2$ is a best-reply ex-ante for the principal:\begin{align*}
        &4\mu+(1-\mu)(-1+3\epsilon)\geq 2\mu +(1-\mu),\\
        &4\mu+(1-\mu)(-1+3\epsilon)\geq 5\mu +(1-\mu)(-1).
    \end{align*}

    Now I show that there is a strategy $\alpha$ that can do better than that. Let $\alpha(1\vert m_1)=\epsilon$, $\alpha(3\vert m_1)=1-\epsilon$ and $\alpha(2\vert m_2)=1$. The strategy of the agent is $\sigma(m_2\vert\theta_2)=1$. 

    This strategy gives strictly higher profits to the principal and is incentive-compatible for the agent. 

    The IC constraint of $\theta_2$ must satisfy$$
1-\epsilon\geq \epsilon\cdot0+(1-\epsilon)\cdot1.
    $$

    I show that the profits must be higher than playing $a=2$ with probability one:\begin{gather*}
        \mu\epsilon\cdot 2+\mu(1-\epsilon)\cdot 5+(1-\mu)(-1+3\epsilon)> \mu 4+(1-\mu)(-1+3\epsilon)\\
        \Leftrightarrow \epsilon< 1/3.
    \end{gather*}
        Since playing $a=2$ is strictly optimal ex-ante for some $\mu$, there is a mechanism where the principal does not best-reply and gives strictly higher profits against all the strategies where the principal best-replies.

To see there is a discontinuity, observe that if the principal decreases $\alpha(2\vert m_1)$ and increases $\alpha(1\vert m_2)$, then type $\theta_2$ has a strict incentive to choose $m_1$ leading to a discrete drop in profits.

\end{example}

\newpage

\bibliographystyle{agsm}
\bibliography{bib.bib} 

\end{document}